\begin{document}
	\title{Publicly Verifiable Private Information Retrieval Protocols Based on Function Secret Sharing}
	\titlerunning{PVPIR on FSS}
	
	\author{
		Lin Zhu\inst{1} \and
		Lingwei Kong\inst{1} \and
		Xin Ning\inst{2} \and
		Xiaoyang Qu\inst{1} \and
		Jianzong Wang\inst{1}\textsuperscript{(\Letter)}
	}
	
	\authorrunning{Zhu et al.}
	\institute{
		Ping An Technology (Shenzhen) Co., Ltd., Shenzhen, China \\
		\email{zhulinneko@gmail.com, 530124621@qq.com, quxiaoy@gmail.com} \\
		\email{\Letter jzwang@188.com}
		\and
		Shenzhen Youfang Information Technology Co., Ltd., Shenzhen, China \\
		\email{ningxin@163.com}
	}

	\maketitle              
	
	\begin{abstract}		
		Private Information Retrieval (PIR) is a fundamental cryptographic primitive that enables users to retrieve data from a database without revealing which item is being accessed, thereby preserving query privacy. 
		However, PIR protocols also face the challenge of result verifiability, as users expect the reconstructed data to be trustworthy and authentic.
		In this work, we propose two effective constructions of publicly verifiable PIR (PVPIR) in the multi-server setting, which achieve query privacy, correctness, and verifiability simultaneously.
		We further present three concrete instantiations based on these constructions.
		For the point query, our protocol introduces minimal computational overhead and achieves strong verifiability guarantees with significantly lower communication costs compared to existing Merkle tree–based approaches. 
		For the predicate query, the communication complexity of our scheme remains stable as the database size increases,
		demonstrating strong scalability and suitability for large-scale private query applications.

		\keywords{private information retrieval \and public verifiability \and function secret Sharing}
	\end{abstract}

	\section{Introduction}	
	{\em Private Information Retrieval} (PIR) \cite{CGKS95} aims to provide the user with the means to retrieve data from a database without the database owner or keeper learning any information about the particular item that was queried.
	The PIR user wants to download the data ${\bm x}_i$ from database ${\bm X}=({\bm x}_1, {\bm x}_2, \cdots, {\bm x}_N)$ without leaking the corresponding index $i$ to any servers.
	PIR is a fundamental building block for many proposed privacy-sensitive applications in the literature, including 
	collective authority \cite{STVWJGGKF16}
	signature-producing blockchain \cite{KEPNLIJB17},
	private database search \cite{WYSVM17},
	private video streaming consumption \cite{GCMSAW16}
	and credential breach reporting \cite{KM20}.
	In these services, any user can query the data, and the datasets themselves are not sensitive, 
	but privacy concerns can make the user unwilling to disclose his or her interest information.

	Most of the existing classic PIR protocols \cite{LP23,AMHSV23,DG15,MAEW23,SRR14,C04,CMS99,ACLS18,MCR21,CK20}
	consider honest-but-curious servers, with the assumption that servers strictly follow the protocols' specifications.
	A malicious server may arbitrarily deviate from the protocol, potentially causing the user to receive incorrect data. In modern cloud environments, where servers can be powerful but untrusted, PIR protocols resilient to such behavior are highly desirable. This motivates verifiable PIR (VPIR) protocols \cite{ZSN14,SYO22}, which enable the client to verify the correctness of retrieved data, detect misbehavior, and ensure integrity. Such guarantees are critical in privacy-sensitive applications, including public key directories, block-chain records, and federated analysis, where incorrect responses may cause serious failures or security breaches. Table~\ref{tab:comparison} summarizes several advanced PIR schemes and their verifiability.

	While verifiable PIR protocols improve classical PIR by ensuring server response integrity, most existing schemes offer only private verifiability, allowing only the querying client to check correctness. This limits transparency and prevents external auditing. In contrast, public verifiability enables any third party with the public verification key to verify the consistency of the reconstructed message with the database—i.e., detect malicious behavior—without revealing the query or its result.
	
	We explore the publicly verifiable PIR (PVPIR) protocols in the $k$-server setting, leveraging \emph{function secret sharing (FSS)}, 
	and focus on both \emph{predicate queries} and \emph{point queries}.
	Our constructions are motivated by efficiency, public verifiability, and scalability.
	Compared to privately verifiable PIR, where only the querying client can check the correctness of the response, publicly verifiable PIR provides stronger auditability and transparency by allowing any third party to verify the result, independent of the client. 
	This property is particularly valuable in settings where trust must be externally certified, such as delegated access, compliance auditing, or multi-stakeholder data sharing. 
	On the other hand, in both the privately or publicly verifiable case, 
	achieving verifiability often opens the door to select failure attacks, 
	wherein an adversary deliberately induces failure by returns selective incorrect answers, and then exploits the client’s observable reaction to infer the secret query message.
	This attack model is particularly powerful because it does not rely on breaking cryptographic primitives directly, but rather on exploiting error-handling behavior or side-channel feedback in implementation. 
	As a result, protocols that offer strong theoretical security may still leak critical information in real-world deployments if they fail to account for selective failure vectors.

	Existing Merkle tree-based protocols do not address both challenges simultaneously, leaving a crucial gap in the design of secure and trustworthy PIR systems. 
	Our work bridges this gap by introducing the first PIR framework that simultaneously achieves public verifiability and provable robustness against select failure attacks. 
	This contribution advances the practical deployment of PIR in sensitive domains where verifiability, privacy, and robustness against active adversaries are all essential.

	\vspace{-1em}
	\subsection{Our Contributions}
	The primary contribution of this work is the design of three publicly verifiable PIR constructions in the two-server model. These protocols achieve both query privacy and results integrity, while supporting predicate and point queries efficiently. 
	Our first two schemes handle general predicate queries and provide scalable public verifiability under the discrete logarithm(DL) and RSA assumptions respectively.
	The third scheme is actually an instantiation of the first predicate query construction on the point function domain. It achieves public verifiability, while the Merkle tree-based scheme does not.	
	
	For any integer $k\geq 1$, our $k$-server PVPIR model
	for function class ${\mathcal F}\subseteq {\sf Funs}[[N]\times \{0,1\}^{\ell}, \mathbb{F}]$ 
	has a database ${\bm X}=({\bm x}_1,\ldots,{\bm x}_N)$, weights ${\bm \omega}\in \mathbb{F}^N$, 
	a {\em user} and $k$ {\em  servers} $\mathcal{S}_1, \mathcal{S}_2, \ldots, \mathcal{S}_k$.
	Each server keeps an identical copy of the encoding $X$.
	Our PVPIR model consists of four algorithms $\Pi = ({\sf KeyGen}, {\sf Query}, {\sf Answer},\\
	{\sf Reconstruct})$.
	To split the query function, the user will firstly run $\Pi.{\sf KeyGen}$ to generate a public key ${\sf pk}$ and a secret key ${\sf sk}$, and then run $\Pi.{\sf Query}$ to convert the query function $f$ into $k$ shares $({\bm q}_1, \ldots, {\bm q}_k)$ and a public verification key ${\sf vk}$.
	For every $j\in[k]$, the server $\mathcal{S}_j$ runs $\Pi.{\sf KeyGen}$ and uses $({\sf pk}, {\bm X}, {\bm \omega}, {\bm q}_j)$ to generate an answer ${\bm a}_j$ to the user.
	Finally, the user makes use of $\Pi.{\sf Reconstruct}(\{{\bm a}_j\}_{j=1}^k, {\sf pk}, {\sf vk})$ to locally reconstruct and verify the reconstructed message.
	A PVPIR protocol is {\em $t$-private} if any $t$ out of $k$ servers cannot learn the user’s query $f$, even when observing the client’s output. This notion is strictly stronger than classical PIR. The protocol is {\em secure} against active adversaries if the probability that the user accepts an incorrect result is negligible.

	According to the descriptions above, we know the public verifiability comes from 
	lightweight cryptography. 
	In contrast to prior FSS-based APIR schemes, which are limited to be private verification, our three constructions support \emph{public verifiability}. 
	This allows external auditors or third parties to validate results without access to any secret keys, 
	which enables broader transparency and stronger integrity guarantees in multi-party scenarios.

	\vspace{-8mm}
	\begin{table}
		\caption{\small Comparison of our constructions and other state-of-art PIR schemes, with $N$ being the size of the database.}
		\vspace{0.5mm}
		\centering
		\begin{tabular}[thb]{l|l|l|l|l}
			\toprule
			Scheme &  Servers & Verifiable & Communication & Computation \\
			\midrule
			Tree-PIR\cite{LP23} & 2 & \ding{55} & $\mathcal{O}(\sqrt{N})$ & $\mathcal{O}(\sqrt{N}\log^2N)$\\
			VPIR\cite{ZSN14} & 2 & \cellcolor{cyan!25}\ding{51} & $\mathcal{O}(N^{1/3}\log N)$ & $\backslash$ \\
			APIR(Merkle tree, point query)\cite{CNCWF23} & 2 & \cellcolor{cyan!25}\ding{51} & $\mathcal{O}(\sqrt{N}\log N)$ & $\mathcal{O}(N\log N)$\\
			Crust\cite{AH23} & 2 & \cellcolor{cyan!25}\ding{51} & $\mathcal{O}(\sqrt{N})$ & $\mathcal{O}(\sqrt{N})$\\
			\rowcolor{blue!25}Our work (scheme $\Pi_3$) & 2 & \cellcolor{cyan!25}\ding{51}(publicly) & $\mathcal{O}(\lambda\log N)$ & $\mathcal{O}(N)$\\
			\hline
			DPF-PIR\cite{GI14} & 2 & \ding{55} & 
			$\mathcal{O}(\log N)$ & $\mathcal{O}(N)$ \\
			
			APIR(FSS-based)\cite{CNCWF23} & 2 & \cellcolor{cyan!25}\ding{51}(privately) & $\mathcal{O}(\lambda\log N)$ & $\mathcal{O}(N)$\\
			\rowcolor{blue!25}Our work (scheme $\Pi_1$) & 2 & \cellcolor{cyan!25}\ding{51}(publicly) & $\mathcal{O}(\lambda\log N)$ & $\mathcal{O}(N)$\\
			\rowcolor{blue!25}Our work (scheme $\Pi_2$) & 2 & \cellcolor{cyan!25}\ding{51}(publicly) & $\mathcal{O}(\lambda\log N)$ & $\mathcal{O}(N)$\\
			\bottomrule
		\end{tabular}
		\label{tab:comparison}
	\end{table}
	\vspace{-1em}

 	All three protocols achieve communication complexity as low as $O(\lambda \log N)$, where $N$ is the database size and $\lambda$ is the security parameter. 
 	Through both theoretical analysis and empirical evaluation, we demonstrate that our constructions offer superior scalability and efficiency compared to state-of-the-art baselines such as APIR and Crust, as shown in Table~\ref{tab:comparison}.
 	Our schemes can also resist the selective failure attack, ensuring the privacy of the query even exposing the verification result to the adversary.
 	Our schemes also remain secure when at most $k-1$ servers are malicious. 
 	Even in the presence of an adversarial server, the client can either retrieve the correct result or detect misbehavior with overwhelming probability.

	\subsection{Related Work}  
	The notion of public verifiability has been extensively studied across multiple branches of modern cryptography, with the central goal of enabling any third party to independently verify the correctness of a cryptographic object or protocol execution.
	This idea was later extended to publicly verifiable secret sharing (PVSS) schemes
	such as those of Feldman\cite{FP87} and Pedersen\cite{PT92}, 
	which allow external parties to check the consistency of distributed shares in threshold protocols. 
	Similarly, in the context of zero-knowledge proofs, 
	non-interactive zero-knowledge (NIZK) systems\cite{BFM88,JA07}
	and modern zk-SNARK\cite{G13,BCTV14,G16}
	constructions provide succinct and publicly verifiable proofs of knowledge, forming the backbone of verifiable computation and blockchain applications.
	
	Public verifiability has also been a core design principle in verifiable computation and proofs of retrievability/storage, where outsourced servers must convince anyone that the results or stored data are correct without requiring trust.

	More recently, verifiable delay functions (VDFs)\cite{BBBF18} have been employed to provide publicly verifiable randomness and fairness in decentralized settings. 
	These developments highlight the broad utility of public verifiability for ensuring transparency, auditability, and non-repudiation in distributed systems.
	
	Building on these foundations, publicly verifiable private information retrieval (PIR) adapts the same paradigm to database queries: in addition to preserving query privacy, the server must return responses whose correctness can be validated by anyone (not only the querying client). 
	This property makes publicly verifiable PIR particularly appealing in decentralized or blockchain-based scenarios, where transparency and accountability are as important as privacy.

	\subsection{Applications} 
	Our multi-server verifiable schemes many be applied into scenarios that require privacy-preserving and verifiable computation of polynomial or arithmetic circuits on sensitive query.
	Public verifiability verifiable PIR enables not only the querying user but also any third party to independently verify the correctness of the retrieved result. 
	Compared with privately verifiable schemes, 
	which restricts verification to the user alone, public verifiability offers stronger auditability and transparency. 
	This property is particularly valuable in scenarios where results need to be shared, delegated, 
	or externally audited without compromising trust or correctness.
	
	Typical application scenarios include block-chain-based systems, where users query on-chain or off-chain data and attach publicly verifiable proofs to ensure trust-less validation; 
	lightweight clients or IoT devices, which may delegate verification to more capable parties; 
	and collaborative data access settings, where multiple parties rely on a shared result with verifiable integrity. 
	Public verifiability also provides a strong foundation for non-repudiation and regulatory compliance in data-sensitive environments.

	\section{Preliminaries}
	\label{sec:pre}
	
	\subsection{Notation}
	\label{sec:notation}
	We use $\mathbb{N}$ to denote the set of natural numbers, and let $\mathbb{F}$ denote a finite field, with $|\mathbb{F}|$ representing the number of elements in $\mathbb{F}$.
	For any $N \in \mathbb{N}$, we write $[N] := {1, 2, \ldots, N}$.
	Given a group $\mathbb{G}$, we denote its identity element by $1_{\mathbb{G}}$.
	Let $\mathbb{F}^n$ denote the $n$-dimensional vector space over the field $\mathbb{F}$.
	For finite sets $S$ and $T$, we write $\mathsf{Funs}[S, T]$ to denote the set of all functions from domain $S$ to co-domain $T$.
	
	We say that a function $\epsilon(\lambda)$ is {\em negligible} in $\lambda$ and denote
	$\epsilon(\lambda)={\sf negl}(\lambda)$, if for any $c>0$, there exists
	$\lambda_c>0$ such that $\epsilon(\lambda)<\lambda^{-c}$ for all $\lambda \ge \lambda_c$.
	We denote a function family ${\mathcal F}$ as an infinite collection of functions, and each function $f$ involves two efficient procedures {\sf IdentifyDomain} and {\sf Evaluate}.
	The former $D_f \leftarrow {\sf IdentifyDomain}(1^\lambda, f)$ maps the function $f$ into its domain space $D_f$,
	while the latter $f(x) \leftarrow {\sf Evaluate}(f, x)$ means the output of $f$ at the input $x\in D_f$.
	The symbol $\perp$ is an output that indicates rejection.

	\subsection{\bf Function Secret Sharing}
	\label{subsec:fss}
	The function secret sharing in multi-server setting PIR protocols enables a client to compute the image of the database under function $f$, 
	without revealing any information to these servers.
	The FSS scheme supports to split the function $f$ into several keys, which enables each server to compute a secret share of $f(x)$.

	\begin{definition}
	(Function secret sharing, FSS.)
	A $k$-party function secret sharing\cite{BGI15} protocol is defined with respect to a function class $\mathcal{F}$.
	Each function $f\in \mathcal{F}$ maps elements in some input space to a finite group or field $\mathbb{F}$. 
	Along with an additive output decoding algorithm ${\tt Dec}^{+}(f_1, f_2, \ldots, f_k) = \sum_{i=1}^{k} g_i$ (computing the sum of shares with group operator $+$), 
	an additive function secret protocol consists of a pair of polynomial-time algorithms ${\sf FSS}=({\sf Gen}, {\sf Eval})$ with the following syntax:		
	\begin{itemize}		
		\item ${\sf Gen}(1^\lambda, f) \rightarrow (f_1, f_2, \ldots, f_k)$: On input the security parameter $1^\lambda$ and function description $f\in \mathcal{F}$, the key generation algorithm outputs $k$ function-secret-shares ($f_1, f_2, \ldots, f_k$). 
		
		\item ${\sf Eval}(i, f_i, x) \rightarrow f_i(x)$: On input a party index $i$, share $f_i$ and input string $x$, the evaluation algorithm outputs a value $f_i(x)$.
	\end{itemize}
	
	\end{definition}
	
	Its correctness guarantees that applying all the secret share of $f(x)$ can recover the target evaluation $f(x)$.
	Its security ensures that set of any $k-1$ keys does not divulge any information of the function $f$.
	\begin{itemize}
		\item[$\bullet$] {\bf Correctness}: For all $f\in \mathcal{F}, x\in D_f$, 
		$$
		\Pr \left[
		\begin{aligned}
			(f_1, \cdots, f_k) \leftarrow 
			& \mathsf{Gen}(1^{\lambda}, f): \\
			&{\tt Dec}^{+}\left(
			\mathsf{Eval}(1, f_1, x), \cdots, \mathsf{Eval}(k, f_k, x) \right)
			= f(x)
		\end{aligned}
		\right ] = 1
		$$
		
		\item[$\bullet$] {\bf Security}: 
		Consider the following experiment of indistinguishability challenge ${\sf EXP}_{\mathcal{A}, {\sf FSS}}(T, \lambda)$ in {Fig.}~\ref{fig:sec of FSS}, for corrupted parties $T\subset [p]$, $|T|= k-1$
		we say the scheme $(\mathsf{Gen}, \mathsf{Eval})$ is secure if for all non-uniform PPT adversaries $\mathcal{A}$, it holds that 
		the advantage of $\mathcal{A}$ in guessing $b$ in the above experiment, 
		$\mathsf{Adv}(1^\lambda, \mathcal{A}) \leq {\sf negl}(\lambda)$, 
		where  $\mathsf{Adv}(1^\lambda, \mathcal{A}) := \left|\Pr[b=b']-1/2\right|$ 
		and the probability is taken over the randomness of the challenger and of $\mathcal{A}$.
	\end{itemize}
	
	\vspace{-1em}
	\begin{figure}[ht]
		\begin{center}
			\begin{boxedminipage}{11cm}
				\begin{enumerate}
					\item[(a)] 
					The adversary ${\mathcal A}$ outputs $(f^0, f^1, \mathsf{aux}) \leftarrow \mathcal{A}(1^\lambda)$, where $f^0, f^1\in \mathcal{F}$ with the same domain.
					
					\item[(b)] 
					The challenger samples $b\leftarrow \{0,1\}$ and $(f_1, f_2, \cdots, f_k) \leftarrow \mathsf{Gen}(1^\lambda, f^b)$.
					
					\item[(c)]
					The adversary outputs a guess $b' \leftarrow \mathcal{A}(\{f_i\}_{i\in T}, \mathsf{aux})$, given the keys for corrupted $T$.
				\end{enumerate}
			\end{boxedminipage}
			\caption{\small Experiment  
				${\sf EXP}_{\mathcal{A}, {\sf FSS}}(T, \lambda)$}
			\label{fig:sec of FSS}
			\vspace{-2em}
		\end{center}		
	\end{figure}
	
	\vspace{-1em}
	\begin{definition}{\bf (Function class closed under scalar multiplication).} 
		\label{def: closed under scalar multiplication}
		Let $\mathbb{F}$ be a class of functions whose co-domain is a
		finite field $\mathbb{F}$. 
		Then we say that the function class $\mathbb{F}$ is closed
		under scalar multiplication, if for all functions $f\in \mathcal{F}$ and for all scalars $\alpha\in \mathbb{F}$, it holds that the function $\alpha \cdot f \in \mathcal{F}$.
	\end{definition}

	\section{\bf The Publicly Verifiable PIR Model}
	\label{sec:pvpir model}
	In this section, we formalize the notion of \emph{Publicly Verifiable Private Information Retrieval (PVPIR)}. While classic PIR protocols ensure query privacy, they typically lack integrity guarantees, leaving users vulnerable to incorrect or tampered responses from malicious servers. 
	To address this limitation, we extend the PIR model by incorporating explicit verification mechanisms that allow the correctness of the server's response to be checked, even by other third parties.
	
	We define the standard syntax of a PVPIR protocol, where the client has function $f\in \mathcal F$, the server  database $\bm X$ and weights ${\bm \omega}$, 
	it consists of key generation, query, answering and reconstruction procedures, i.e, $({\sf KeyGen}, {\sf Query}, {\sf Answer}, \\ {\sf Reconstruct})$. 
	When the function class ${\mathcal F}$ supports weighted queries, this general syntax subsumes not only standard definitions of multi-server PIR, but also a broad class of predicate queries, making it expressive enough to capture most practical PIR use cases.
	This definition lays the groundwork for our constructions, which provide cryptographic guarantees of both privacy and publicly verifiable integrity in the multi-server setting.

	\begin{definition}
		A public verifiable private information retrieval scheme $\Pi$ for function class ${\mathcal F}\subseteq {\sf Funs}[[N]\times \{0,1\}^{\ell}, \mathbb{F}]$ 
		and database ${\bm X}=({\bm x}_1,\ldots,{\bm x}_N)$, weights ${\bm \omega}\in \mathbb{F}^N$, consists of four algorithms $\Pi = ({\sf KeyGen}, {\sf Query}, {\sf Answer}, {\sf Reconstruct})$ with the following syntax:
		\begin{itemize}
			\item $({\sf pk},{\sf sk}) \leftarrow {\sf KeyGen}(1^{\lambda})$: 
			The randomized key generation algorithm takes as input the security parameter $\lambda$.
			Output a public key ${\sf pk}$ and a secret key $\sf sk$.
			
			\vspace{1mm}
			\item $({\bm q}_1,{\bm q}_2,\ldots, {\bm q}_k,{\sf vk}) \leftarrow {\sf Query}({\sf pk}, {\sf sk}, f)$: 
			The randomized query algorithm takes the public key {\sf pk}, the secret key $\sf sk$, and a function $f\in \mathcal{F}$ as input. 
			Output $k$ queries $({\bm q}_1,{\bm q}_2,\ldots$, ${\bm q}_k)$ to servers 
			and publish a public verification key $\sf vk$.
			
			\vspace{1mm}
			\item ${\bm a}_j \leftarrow {\sf Answer}({\sf pk},{\bm X}, {\bm \omega}, {\bm q}_j)$: 
			The deterministic answer algorithm takes  
			the public key ${\sf pk}$, the database ${\bm X}$, the weights ${\bm \omega}$, and the query ${\bm q}_j$ as input.
			Output answer ${\bm a}_j$.
			
			\vspace{1mm}
			\item $m \leftarrow {\sf Reconstruct}({\bm a}_1,{\bm a}_2,\ldots, {\bm a}_k,{\sf pk},{\sf vk})$: 
			The deterministic reconstruct algorithm takes the public key $\sf pk$, the verification key ${\sf vk}$ and the answer $({\bm a}_1,{\bm a}_2,\ldots, {\bm a}_k)$  as input.
			Outputs an item $m\in \left\{\sum_{i\in [N]} {\bm \omega}_i\cdot f(i,{\bm x}_i),\, \bot \right\}$.
		\end{itemize}
	\end{definition}

	\begin{definition}
		\label{def:vpir}
		A publicly verifiable private information retrieval scheme $\Pi$ satisfies the following requirements:
		\begin{itemize}
			\item[$\bullet$] {\bf Correctness}. 
			Informally, the scheme $\Pi$ is correct, 
			when an honest client interacts with honest servers,  the client always recovers the weighted output of its chosen  function applied to the database, i.e., $\sum_{i\in [N]} {\bm \omega}_i f(i,{\bm x}_i)$.
			
			\vspace{1mm}
			\item[$\bullet$] {\bf Privacy}.	
			The scheme $\Pi$ is private, 
			when any coalition of up to $k-1$ servers cannot learn information about the user's input, 
			even if the servers have learned whether the client possesses a valid output or not. 
			Standard PIR schemes do not necessarily satisfy our strong notion of privacy, 
			since such schemes may be vulnerable to selective-failure attacks.
			
			\vspace{1mm}
			\item[$\bullet$] {\bf Security}. 
			The scheme $\Pi$ is secure, 
			when a client interacts with a set of $k$ servers, 
			where at most $k-1$ can be malicious and might collude with each other to persuade the client 
			with probability $1-{\sf negl}(\lambda)$.
		\end{itemize}
	\end{definition}
	We formally define these properties in Appendix \ref{app:pvpir}.

	\section{PVPIR Constructions}
	\label{sec:construction}
	In this section, we focus on the key construction of PVPIR schemes in the multi-server setting. 
	Our treatment is based on \textit{predicate query}, which serve as the general framework: the user retrieves all database entries that satisfy a specified predicate. 
	The classical \textit{point query} can be regarded as a special instance of this model, where the predicate simply selects a single item indexed by the query. 
	Thus, we first present constructions and analyses for predicate queries under different assumptions, and later explain how point query naturally follows as a direct specialization.

	\vspace{-1em}
	\subsection{Predicate Query}
	We consider the verifiable PIR schemes where the verification key {\sf vk} is public.	
	A central reason for this design choice is that public verifiability extends the trust model beyond the querying client. 
	While privately verifiable schemes restrict correctness checks to the client alone, public verifiability allows any external party to independently validate the server’s response. 
	This broader guarantee not only strengthens transparency but also makes the construction applicable in settings where multiple stakeholders must be assured of the query's integrity.

	Motivated by the applications of FSS in \cite{BGI15,BGI16}, this construction is parameterized by a number of servers $k\in \mathbb{N}$, 
	a database ${\bm X}=({\bm x}_1,\ldots,{\bm x}_N)$ of size $N\in \mathbb{N}$, with length of each entry being $\ell\in \mathbb{N}$, 	
	a function $f\in \mathcal{F}$, a weight vector ${\bm w}=({\bm \omega}_1,\ldots,{\bm \omega}_N)\in \mathbb{F}^N$, where $\mathcal{F}\subseteq {\sf Funs}[[N]\times \{0,1\}^{\ell}, \mathbb{F}]$ is a class of functions closed under scalar multiplication.
	
	\paragraph{\bf Construction 1.}
	\label{cons:predicateDl}
	The first publicly verifiable predicate query construction, $\Pi_2$, is illustrated in {Fig.}~\ref{construction: DL-based}.
	In the key generation phase $\sf KeyGen$, the client selects a group $\mathbb G$ of prime order $q_{\mathbb G}$ with a group generator $\xi$, and sets the public key as $(\mathbb G, q_{\mathbb G}, \xi)$.
	
	\vspace{1mm}
	In the $\sf Query$ phase, the client chooses a random
	$\alpha \in \mathbb{F}_{q_{\mathbb G}}\setminus \{0\}$ to compute the verification key ${\sf vk} = \xi^{\alpha}$.
	Then the client divides the function $f$ into $k$ additive function shares $(q_1,\ldots,q_k)$ such that $\sum_{j\in [k]} q_k=f$.
	For verification, choose the secret key $d$ to compute such a function $g$ that $g = \alpha \cdot f\in \mathcal{F}$, then split function $g$ into $k$ shares $(q'_1,\ldots,q'_k)$ such that $\sum_{j\in [k]} q'_k=g$. 
	The output of the $\sf Query$ phase is a pair of function shares ${\bm q}_j=(q_j, q'_j)$ for each server $j\in [k]$ and a public verification key ${\sf vk}$.
	Resulted from the DL assumption, given $\sf pk$ and $\sf vk$, 
	it is hard for a polynomial-time adversary to compute $\alpha$ from $\xi^{\alpha}$, if $q_{\mathbb G}$ is large enough. 
	
	\vspace{1mm}
	In the $\sf Answer$ phase, the $j$th server $\mathcal{S}_j$ evaluates the function shares on the database ${\bm X}$ and returns the answer ${\bm a}_j$ to the client. 
	By evaluating the function shares $(q_j,q'_j)$ at each database entry ${\bm x}_i,i\in [N]$, the server $j$ computes the answer ${\bm a}_j=(a_j,a'_j)$ as follows: 
	\begin{displaymath}
		a_j = \sum_{i\in[N]} {\bm \omega}_i \cdot q_j(i,\cdot) \in \mathbb{F}, \quad
		a'_j = \sum_{i\in[N]} {\bm \omega}_i \cdot q'_j(i,\cdot) \in \mathbb{F},
	\end{displaymath}
	where $q_j(i,\cdot)={\sf FSS.Eval}(j,q_j,i)$ 
	and $q'_j(i,\cdot)={\sf FSS.Eval}(j,q'_j,i)$.
	By the definition of $q_j$ and $q'_j$, we can see that 
	
	\begin{equation*}
		\label{eq:predicate1anser}
		\begin{aligned}
			\sum_{j\in [k]}a_j &= 
			\sum_{j\in [k]} \sum_{i\in[N]} {\bm \omega}_i \cdot q_j(i,\cdot) 
			&= 
			\sum_{i\in[N]}  f(i,\cdot) &= f({\bm X}), \\
			\sum_{j\in [k]}a'_j &= 
			\sum_{j\in [k]}\sum_{i\in[N]} {\bm \omega}_i \cdot q'_j(i,\cdot) &=
			\sum_{i\in[N]} g(i,\cdot) &= \alpha \cdot f({\bm X}).
		\end{aligned}
	\end{equation*}
	
	In the $\sf Reconstruct$ phase, the client reconstructs the answer  ${\bm a} = ({\bm a}^{(0)},{\bm a}^{(1)}) = 
	(\sum_{j\in [k]}a_j,\, \sum_{j\in [k]}a'_j)$. 
	By checking the equation 
	${\sf vk}^{({\bm a}^{(0)})} = \xi^{({\bm a}^{(1)})}$, 
	the client can verify the correctness of the answer. 
	If the verification passes, the client accepts the answer $m={\bm a}^{(0)}=\sum_{i\in[N]} f(i,\cdot)$ as the output of the query.
	
	\vspace{-1em}
	\begin{figure}[!ht]
		\centering
		\begin{boxedminipage}{12cm}
			{\bf Scheme $\Pi_1$ ($k$-server PVPIR for predicate queries to tolerating $k-1$ malicious servers, DL-based)}
			\begin{enumerate}
				\item  ${\sf KeyGen}(1^{\lambda}) \rightarrow ({\sf pk},{\sf sk})$: 
				\begin{enumerate}
					\item Chooses a prime order group $\mathbb{G}$, with the order $q_{\mathbb{G}}$ and a generator $\xi \in \mathbb{G}$.
					
					\vspace{1mm}
					\item Output the public key ${\sf pk}=(\mathbb{G}, q_{\mathbb{G}}, \xi)$, the secret key ${\sf sk}=\bot$.
				\end{enumerate}
				
				\vspace{1mm}
				\item {\sf Query}$({\sf pk},{\sf sk}, f) \rightarrow ({\bm q}_1,{\bm q}_2,\ldots, {\bm q}_k,{\sf vk}) $:
				\begin{enumerate}
					\item Sample a random field element $\alpha \in \mathbb{F}_{q_{\mathbb{G}}} \setminus \{0\}$. 
					
					\vspace{1mm}
					\item Set public verification key ${\sf vk}:= \xi^ \alpha$.					
					
					\vspace{1mm}
					\item Let $g=\alpha \cdot f$. Such a function $g$ exists according to the function class ${\mathcal F}$ is closed under scalar multiplication, as Definition \ref{def: closed under scalar multiplication}.
					
					\vspace{1mm}
					\item Compute $(q_1, q_2, \cdots, q_k) \leftarrow {\sf FSS.Gen}(1^{\lambda}, f)$ together with $(q'_1, q'_2, \cdots, q'_k) \leftarrow {\sf FSS.Gen}(1^{\lambda}, g)$.
					
					\vspace{1mm}
					\item Output $({\bm q}_1=(q_1, q'_1), \ldots, {\bm q}_k=(q_k, q'_k),{\sf vk})$.
				\end{enumerate}
				
				\vspace{1mm}
				\item ${\sf Answer}({\sf pk},{\bm X},{\bm \omega}, {\bm q}_j)\rightarrow {\bm a}_j$: 
				\begin{enumerate}
					
					\item Parse the query ${\bm q}$ as $(q_f, q_g)$, the database ${\bm X}$ as $({\bm x}_1, \ldots, {\bm x}_N)$ and the weights ${\bm \omega}=({\bm \omega}_1, \ldots, {\bm \omega}_N)$.
					
					\vspace{1mm}
					\item Evaluate function shares on the database and obtain $a_f \leftarrow \sum_{i\in[N]} {\bm \omega}_i \cdot {\sf FSS.Eval}(j,q_f, {\bm x}_i)$ and 
					$a_g \leftarrow \sum_{i\in[N]}{\bm \omega}_i \cdot {\sf FSS.Eval}(j,q_g, {\bm x}_i)$.

					\vspace{1mm}
					\item Output ${\bm a}_j \leftarrow (a_f, a_g)$.
				\end{enumerate}
				
				\vspace{1mm}
				\item ${\sf Reconstruct}({\bm a}_1,{\bm a}_2,\ldots, {\bm a}_k,{\sf pk},{\sf vk}) \rightarrow m$
				\begin{enumerate}
					\item Compute the sum ${\bm a}\leftarrow  {\bm a}_1 + {\bm a}_2 + \cdots + {\bm a}_k$.
					
					\vspace{1mm}
					\item Parse ${\bm a}$ as a pair $(m, \tau)$.
					
					\vspace{1mm}
					\item If ${\sf vk}^m=\xi^\tau$, output $m\in \mathbb{F}$, or $\perp$ otherwise.
				\end{enumerate}
				
			\end{enumerate}
			
		\end{boxedminipage}
		
		\caption{\small Scheme $\Pi_1$: $k$-server PVPIR for predicate queries tolerating $k-1$ malicious servers, based on DL assumption.}
		\label{construction: DL-based}	
	\end{figure}

	\begin{theorem}\label{thm:2}
		For the $k$-server verifiable PIR scheme  
		$\Pi_2 = ({\sf KeyGen}, {\sf Query}, {\sf Answer},\\ {\sf Reconstruct})$ 
		defined on function class ${\mathcal F}\subseteq {\sf Funs}[[N]\times \{0,1\}^{\ell},\, \mathbb{F}]$ 
		and database ${\bm X} = ({\bm {x}}_1,{\bm {x}}_2, \cdots, {\bm {x}}_N)\in \{0,1\}^{N\times \ell}$,
		given the function $f\in \mathcal{F}$, 
		$\Pi_2$ is correct, private, and secure as Definition \ref{def:vpir}.
	\end{theorem}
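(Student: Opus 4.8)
The plan is to verify the three requirements of Definition~\ref{def:vpir} one at a time, using only FSS correctness/security and the hardness of discrete logarithm in $\mathbb G$ as external inputs. Write $m_0:=\sum_{i\in[N]}{\bm\omega}_i f(i,{\bm x}_i)$ for the intended output and recall that $g=\alpha\cdot f\in\mathcal F$ exists by the closure assumption (Definition~\ref{def: closed under scalar multiplication}). \emph{Correctness} is then a one-line computation: by FSS correctness and the additive decoder ${\tt Dec}^{+}$, summing the honest answers yields ${\bm a}^{(0)}=\sum_j a_{f,j}=m_0$ and ${\bm a}^{(1)}=\sum_j a_{g,j}=\sum_i{\bm\omega}_i g(i,{\bm x}_i)=\alpha m_0$, so ${\sf vk}^{m_0}=\xi^{\alpha m_0}=\xi^{\alpha m_0}$, the check in Figure~\ref{construction: DL-based} passes, and the client outputs $m_0$.

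The key preparatory step, used for both privacy and security, is a structural observation about a malicious coalition $T\subset[k]$ with $|T|=k-1$. Let $\delta,\delta'$ be the total additive deviations the coalition introduces into the two aggregates, measured against the \emph{honest} answers for the shares $\{{\bm q}_j\}_{j\in T}$; since $\Pi_2.{\sf Answer}$ is deterministic and takes only public inputs besides ${\bm q}_j$, these honest answers (hence $\delta,\delta'$) are computable by the adversary from its own view. Then the reconstructed pair is $(m,\tau)=(m_0+\delta,\ \alpha m_0+\delta')$, and the verification equation ${\sf vk}^m=\xi^{\tau}$ holds iff $\alpha m_0+\alpha\delta\equiv\alpha m_0+\delta'\pmod{q_{\mathbb G}}$, i.e.\ iff $\alpha\delta=\delta'$; crucially $m_0$ cancels, so the accept/reject bit depends only on $\alpha$ and on $(\delta,\delta')$, not on $f$. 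For \emph{privacy} I would run the privacy experiment with equal-domain challenge functions $f^0,f^1$ and bound the distinguishing advantage by a two-step hybrid across the two independent ${\sf FSS.Gen}$ calls: first replace $\{q_j\}_{j\in T}$ (shares of $f^0$) by shares of $f^1$, then replace $\{q'_j\}_{j\in T}$ (shares of $\alpha f^0$) by shares of $\alpha f^1$. Each hop reduces to FSS security, with the reduction sampling $\alpha$ itself, placing $\alpha$ and the other challenge function in the ${\sf aux}$ string, assembling the adversary's view $({\sf pk},{\sf vk}=\xi^{\alpha},\{{\bm q}_j\}_{j\in T})$, and — by the structural observation — simulating the client's accept/reject bit as $[\alpha\delta=\delta']$ with no access to the honest servers' shares. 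Since ${\sf vk}$ depends only on $\alpha$ and never on $f$, and the accept bit is the same deterministic function of $(\alpha,\delta,\delta')$ in all hybrids, the only dependence on the challenge bit sits in the $T$-shares, which are computationally independent of it; this is exactly what rules out selective-failure attacks.

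For \emph{security}, a coalition of $k-1$ malicious servers accepts a wrong answer $m\neq m_0$ precisely when $\delta\neq0$ and $\delta'=\alpha\delta$, i.e.\ when the adversary effectively outputs $\delta'/\delta=\alpha$. I would first move to a game where $\{q'_j\}_{j\in T}$ are generated as ${\sf FSS.Gen}$ shares of a fixed $\alpha$-independent function of the right domain (e.g.\ $f$ itself) instead of $\alpha f$; this hop is indistinguishable by FSS security, the FSS distinguisher again knowing $\alpha$ and hence able to decide whether the cheating event $\{\delta\neq0\wedge\delta'=\alpha\delta\}$ occurred. In the modified game the adversary's entire view depends on $\alpha$ only through ${\sf vk}=\xi^{\alpha}$, so a successful cheat recovers $\alpha$ from $\xi^{\alpha}$; embedding a discrete-logarithm challenge into $(\xi,{\sf vk})$ yields a DL solver with success probability at least that of cheating in this game, which is therefore negligible — and so, by the preceding hop, is the cheating probability in the real game.

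The hard part will be the verification bit rather than the privacy of the FSS layer. Specifically, one must notice that the DL check makes the honest contribution $m_0$ cancel, so that the accept/reject feedback leaks nothing about the query; and, in the security proof, one cannot reduce to DL directly because honestly generating the shares of $g=\alpha f$ needs $\alpha$, so the preliminary FSS-security game hop that scrubs the $q'_j$-shares of their $\alpha$-dependence is essential. A minor point to spell out is the compatibility of the field $\mathbb F$ with the exponent ring $\mathbb Z_{q_{\mathbb G}}$, so that the relation $\alpha\delta=\delta'$ and the quotient $\delta'/\delta$ are meaningful; everything else is routine bookkeeping of the hybrid advantages into a single ${\sf negl}(\lambda)$ bound.
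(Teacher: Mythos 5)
Your proof follows the same high-level decomposition as the paper's (correctness from FSS correctness; privacy and security from FSS security plus DL), but it fills in two steps that the paper leaves implicit or incomplete, one of which is a genuine gap. On privacy, the paper simply delegates to prior work; you instead isolate the structural lemma that the reconstructed pair equals $(m_0+\delta,\ \alpha m_0+\delta')$ and that the check ${\sf vk}^m=\xi^\tau$ collapses to $\alpha\delta\equiv\delta'\pmod{q_{\mathbb G}}$, with $m_0$ --- and hence $f$ --- cancelling. That cancellation is precisely what allows the accept/reject feedback bit to be simulated from $\alpha$ and the coalition's own (self-computable) deviations alone, which is the substance of selective-failure resistance that the paper's citation implicitly relies on. On security, the difference is more significant. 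The paper argues directly that a cheating adversary ``already knows'' $\xi^\alpha$, extracts $\alpha = (\tau-\check\tau)/(m-\check m)$, and ``could solve the discrete logarithm problem.'' But this does not assemble into a reduction: a DL solver given only $(\xi,\xi^\alpha)$ must internally simulate the PVPIR experiment, and simulating it honestly requires running ${\sf FSS.Gen}$ on $g=\alpha f$, i.e.\ knowing $\alpha$. You correctly identify this obstruction and insert a preliminary FSS-security game hop swapping the coalition's $q'_j$-shares for shares of an $\alpha$-independent surrogate; you also make the right technical choice of tracking the \emph{same} test event $\{\delta\neq 0 \wedge \delta'=\alpha\delta\}$ across the hop (the ``true'' cheating condition shifts in the hybrid, but the reduction's checked event must not), bounding its probability in the scrubbed game by the DL advantage and the hop itself by the FSS advantage. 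Without this hop the paper's reduction is incomplete as stated, so your writeup is the more careful one. The only loose end you flagged yourself --- making the arithmetic on $(\delta,\delta')$ live in $\mathbb Z_{q_{\mathbb G}}$ and reconciling $\mathbb F$ with the exponent ring --- is indeed worth spelling out in a polished version.
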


	\begin{proof}[Proof (Sketch)] 
		We prove the correctness, privacy and security of $\Pi_1$ in the following three parts.
		\begin{itemize}
			\item {\bf (Correctness)}: 
			The correctness of $\Pi_1$ follows from the correctness of the FSS scheme.
			Applying the correctness of FSS, $\sum_{j\in [k]} {\sf FSS.Eval}(j,f_j, i) = f(i,\cdot)$, 
			we obtain
			$$m = \sum_{j\in [k]}\sum_{i\in [N]}{\bm \omega}_i \cdot{\sf FSS.Eval}(j,f_j, i) = \sum_{i\in [N]}{\bm \omega}_i \cdot f(i,\cdot)= {\bm x}_{\iota},$$
			and similarly, $\tau=\alpha \cdot m$. 
			Therefore, ${\sf vk}^m=(\xi^\alpha)^m=\xi^\tau$.
			
			\vspace{1mm}
			\item {\bf (Privacy)}: 
			The privacy of $\Pi_1$ follows from the security of the FSS scheme.
			The proof is similar to that in \cite{BGI15,CNCWF23}.
			
			\vspace{1mm}
			\item {\bf (Security)}: 
			Consider a challenge ${\sf Exp}^{\rm Ver}_{{\mathcal A}, \Pi_1}(T, \lambda)$ adapted from {Fig.}~\ref{fig:sec of VPIR} between adversary ${\mathcal A}$ and its challenger, 
			we need to prove that 
			$$ \epsilon_{\Pi_2}(\lambda) = \Pr[{\sf Exp}^{\rm Ver}_{{\mathcal A}, \Pi_2}(T, \lambda)=1] $$ 
			is negligible. 
			WLOG, we assume that $T=[t]$.
			By the correctness of $\Pi_1$, the adversary ${\mathcal A}$ can always compute and output the correct 
			values $({\bf a}_1,\ldots,{\bf a}_t)$. 		
			Given $({\bf a}_1,\ldots,{\bf a}_t,{\bf a}_{t+1},\ldots, {\bf a}_k)$, 
			the algorithm ${\Pi_1}.\sf Reconstruct$ produces an output 
			$m={\bm x}_\iota$ and 
			verifies that 
			$\xi^{{\tau}}= (\xi^{\alpha})^{{m}}$.
			In contrast, within the experiment ${\sf Exp}^{\rm Ver}_{{\mathcal A}, \Pi_2}(T, \lambda)$, 
			a malicious adversary $\mathcal{A}$ typically outputs wrong values $(\check{{\bf a}}_1,\ldots,\check{{\bf a}}_t)$.

			Assume, for the sake of contradiction, that $\epsilon_{\Pi_2}(\lambda)$ is non-negligible.
			That is,  
			upon receiving $(\check{{\bf a}}_1,\ldots,\check{{\bf a}}_t,{\bf a}_{t+1},\ldots,{\bf a}_k)$, 
			the client runs ${\Pi_2}.\sf Reconstruct$ algorithm and successfully obtains  $(\check{m}, \check{\tau})$ such that $\check{m}\neq m$ 
			and $\xi^{\check{\tau}}= (\xi^{\alpha})^{\check{m}}$.
			This, however, implies an information leak, as the adversary can learn 
			\vspace{-1mm}
			$$\alpha = \frac{\tau-\check{\tau}}{m-\check{m}}, 
			{\rm \quad with\; } (m-\check{m},\tau-\check{\tau})=\sum_{j=1}^t ({\bf a}_j - \check{{\bf a}}_j).$$ 
			\vspace{-2em}
			
			\noindent
			Hence, as the adversary $\mathcal{A}$ already knows the public verification key $\xi^\alpha$, 
			it could potentially solve the discrete logarithm problem $(\xi,\xi^\alpha)$ by outputting the  correct $\alpha$ with non-negligible probability $\epsilon_{\Pi_2}(\lambda)$.
			This contradiction completes the proof of security for scheme $\Pi_1$ under the DL assumption. 
		\end{itemize}
	\end{proof}

	\vspace{-1em}
	\paragraph{\bf Construction 2.}
	The second construction, as presented in {Fig.}~\ref{construction: RSA-based}, is defined over a finite field $\mathbb{F}$.
	The database ${\bm X}=({\bm x}_1,\ldots,{\bm x}_N)$ consists of $N$ entries with each of length $\ell\in {\mathbb N}$.
	
	In the $\sf KeyGen$ phase, the client selects $n=pq$ with $p$ and $q$ being safe primes, samples $\xi \in {\mathbb Z}_n^{\times}$ such that $gcd(\xi,n)= 1$, and generates a conditional pair $(e,d)\mod \phi(n)$.
	The public key is set as $(\xi, n, e)$, and the secret key as $d$.

	In the $\sf Query$ phase, 
	the input function $f$ represents an aggregation operator over the database ${\bm X}$ matching a predicate, such as ${\sf sum}$ or ${\sf count}$.
	The client partitions $f$ into $k$ additive function shares $(q_1,\ldots,q_k)$ satisfying $\sum_{j\in [k]} q_k=f$.
	For verification, the client uses the secret key $d$ to compute a function $g$ that $g = d \cdot f\in \mathcal{F}$, which is then split into $k$ shares $(q'_1,\ldots,q'_k)$ such that $\sum_{j\in [k]} q'_k=g$. 
	
	The output of the $\sf Query$ phase consists of pairs of function shares ${\bm q}_j=(q_j, q'_j)$ for each server $j\in [k]$, along with a public verification key ${\sf vk}=e$.
	Under the RSA assumption, given $\sf pk$ and $\sf vk$, 
	it is computationally infeasible for a polynomial-time client to recover $d$ from the relation $\xi^{ed} \equiv 1 \mod n$, provided that $n$ is sufficiently large. 
	
	In the $\sf Answer$ phase, the $j$th server $\mathcal{S}_j$ evaluates the function shares on the database ${\bm X}$ and returns the corresponding answer ${\bm a}_j$ to the client. 
	Specifically, by evaluating the function shares $(q_j,q'_j)$ at each database entry ${\bm x}_i,i\in [N]$, server $j$ computes the answer pair ${\bm a}_j=(a_j,a'_j)$ as follows: 
	\begin{displaymath}
		a_j = \sum_{i\in[N]} {\bm \omega}_i \cdot q_j(i,\cdot) \in \mathbb{F}, 
		\quad
		a'_j = \prod_{i\in[N]} \xi^{{\bm \omega}_i \cdot q'_j(i,\cdot)} \in \mathbb{Z}_n^{\times},
	\end{displaymath}
	where $q_j(i,\cdot)={\sf FSS.Eval}(j,q_j,i)$ 
	and $q'_j(i,\cdot)={\sf FSS.Eval}(j,q'_j,i)$.
	Compared with the previous construction based on DL, the computations of $a'_j$ changes from addition to multiplication.
	By the definition of $q_j$ and $q'_j$, we can see that 
	\begin{equation*}
		\label{eq:predicate2answer}
		\begin{aligned}
			\sum_{j\in [k]}a_j &= 
			\sum_{j\in [k]} \sum_{i\in[N]} {\bm \omega}_i \cdot q_j(i,\cdot) 
			&= 
			\sum_{i\in[N]}  f(i,\cdot) &= f({\bm X}), \\
			\prod_{j\in [k]}a'_j &=  
			\prod_{j\in [k]}\prod_{i\in[N]} {\bm \omega}_i \cdot q'_j(i,\cdot) &=
			\prod_{i\in[N]} \xi^{g(i,\cdot)} &= \xi^{g({\bm X})} = \xi^{d \cdot f({\bm X})}.
		\end{aligned}
	\end{equation*}
	
	In the $\sf Reconstruct$ phase, the client reconstructs the final answer  
	${\bm a} = ({\bm a}^{(0)},{\bm a}^{(1)})$
	$$ {\bm a}^{(0)} = 
	\sum_{j\in [k]}a_j,\quad  {\bm a}^{(1)}=\prod_{j\in [k]}a'_j.$$ 
	
	\vspace{-2mm}
	The client then verifies correctness by checking whether 
	$\xi^{({\bm a}^{(0)})} = {\sf vk}^{({\bm a}^{(1)})}$, 
	if the equation holds, the client accepts the answer and outputs ${\bm a}^{(0)}=\sum_{i\in[N]} f(i,\cdot)$.
	
	\begin{figure}[!ht]
		\begin{center}
			\begin{boxedminipage}{11cm}
				{\bf Scheme $\Pi_2$ ($k$-server PVPIR for predicate queries to tolerating $k-1$ malicious servers, RSA-based)}
				\vspace{0.5em}
				\begin{enumerate}
					\item  ${\sf KeyGen}(1^{\lambda}) \rightarrow ({\sf pk},{\sf sk})$: 
					\begin{enumerate}
						\item Compute a RSA modulus $n=p\cdot q$, where $p,q$ are two safe primes. 
						
						\vspace{1mm}
						\item Sample a random  $\xi \in \mathbb{Z}_{n}^{\times}$ such that $gcd(\xi,n)= 1$. 
						
						\vspace{1mm}
						\item Sample a random value element $d \in \mathbb{Z}_n^{\times}$, 
						and compute $e\equiv d^{-1} \mod{\phi(n)}$, where $\phi(n)=(p-1)\cdot (q-1)$. 
						
						\vspace{1mm}
						\item Output the public key ${\sf pk}=(\xi,n,e)$ and the secret key ${\sf sk}=d$.
					\end{enumerate}
					
					\vspace{1mm}
					\item {\sf Query}$({\sf pk},{\sf sk}, f) \rightarrow ({\bm q}_1,{\bm q}_2,\ldots, {\bm q}_k,{\sf vk}) $:
					\begin{enumerate} 
						\vspace{1mm}
						\item Set public verification key ${\sf vk}:= e$.
						
						\vspace{1mm}					
						\item Let $g={\sf sk} \cdot f$. Such a function $g$ exists according to the function class ${\mathcal F}$ is closed under scalar multiplication, as Definition \ref{def: closed under scalar multiplication}.
						
						\vspace{1mm}
						\item Compute $(q_1, q_2, \cdots, q_k) \leftarrow {\sf FSS.Gen}(1^{\lambda}, f)$ together with $(q'_1, q'_2, \cdots, q'_k) \leftarrow {\sf FSS.Gen}(1^{\lambda}, g)$.
						
						\vspace{1mm}
						\item Output $({\bm q}_1=(q_1, q'_1), \ldots, {\bm q}_k=(q_k, q'_k),{\sf vk})$.
					\end{enumerate}
					
					\vspace{1mm}
					\item ${\sf Answer}({\sf pk}, {\bm X},{\bm \omega}, {\bm q}_j)\rightarrow {\bm a}_j$: 
					\begin{enumerate}
						\item Parse the query ${\bm q}$ as $(q_f, q_g)$, the database ${\bm X}$ as $({\bm x}_1, \ldots, {\bm x}_N)$ and the weights ${\bm \omega}=({\bm \omega}_1, \ldots, {\bm \omega}_N)$.
						
						\vspace{1mm}
						\item Evaluate function shares on the database and obtain $a_f \leftarrow \sum_{i\in[N]} {\bm \omega}_i \cdot {\sf FSS.Eval}(j,q_f, {\bm x}_i)$ and 
						$a_g \leftarrow \prod_{i\in[N]}\xi^{{\bm \omega}_i\cdot {\sf FSS.Eval}(j,q_g, {\bm x}_i)}$.
						
						\vspace{1mm}
						\item Output ${\bm a}_j \leftarrow (a_f, a_g)$.
					\end{enumerate}
					
					\vspace{1mm}
					\item ${\sf Reconstruct}({\bm a}_1,{\bm a}_2,\ldots, {\bm a}_k,{\sf pk},{\sf vk}) \rightarrow m$
					\begin{enumerate}
						\item 
						Reconstruct ${\bm a}\leftarrow ( {\bm a}^{(0)}_1  + \ldots + {\bm a}^{(0)}_k,\,\,
						{\bm a}^{(1)}_1 \cdot  \ldots \cdot {\bm a}^{(1)}_k )$.
						
						\vspace{1mm}
						\item Parse ${\bm a}$ as a pair $(m, \tau)$.
						
						\vspace{1mm}
						\item If $\xi^{m}\equiv \tau^{\sf vk} \mod{n}$, output $m\in \mathbb{F}$, or $\perp$ otherwise.
					\end{enumerate}
					
				\end{enumerate}
			\end{boxedminipage}
		\end{center}	
		\vspace{-1em}
		\caption{\small Scheme $\Pi_2$: $k$-server PVPIR for predicate queries tolerating $k-1$ malicious servers, based on RSA assumption.}
		\label{construction: RSA-based}	
		\vspace{-2em}
	\end{figure}

	\begin{theorem}
		For the $k$-server verifiable PIR scheme  $\Pi_3 = ({\sf KeyGen}, {\sf Query}, \\ {\sf Answer},  {\sf Reconstruct})$ defined on function class ${\mathcal F}\subseteq {\sf Funs}[[N]\times \{0,1\}^{\ell},\, \mathbb{F}]$ and database ${\bm X} = ({\bm {x}}_1,{\bm {x}}_2, \cdots, {\bm {x}}_N)\in \{0,1\}^{N\times \ell}$, given the function $f\in \mathcal{F}$, 
		$\Pi_3$ is correct, private, and secure as Definition \ref{def:vpir}.
	\end{theorem}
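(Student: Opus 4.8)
The plan is to verify the three requirements of Definition~\ref{def:vpir} for $\Pi_3$ (the RSA-based construction just presented) along the same three-part template used for Theorem~\ref{thm:2}, the only structural changes being that the verification identity becomes the RSA relation $\xi^{m}\equiv\tau^{\mathsf{vk}}\bmod n$ and that the tag shares are aggregated multiplicatively, $\tau=\prod_{j\in[k]}a'_j$, rather than additively. For correctness I would apply FSS correctness twice: once to the shares of $f$ to get $m=\sum_{j\in[k]}a_j=\sum_{i\in[N]}{\bm\omega}_i f(i,{\bm x}_i)=f({\bm X})$, and once to the shares of $g={\sf sk}\cdot f$ (which lies in ${\mathcal F}$ by Definition~\ref{def: closed under scalar multiplication}) to get $\tau=\prod_{j\in[k]}a'_j=\prod_{i\in[N]}\xi^{{\bm\omega}_i g(i,{\bm x}_i)}=\xi^{g({\bm X})}=\xi^{d\cdot f({\bm X})}$; since $de\equiv 1\pmod{\phi(n)}$ and $\xi\in\mathbb Z_n^\times$, this gives $\tau^{e}=\xi^{de\cdot m}=\xi^{m}\bmod n$, so ${\sf Reconstruct}$ outputs $f({\bm X})$.

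For privacy, the view of any $t\le k-1$ corrupted servers is the share pairs $\{(q_j,q'_j)\}_{j\in T}$, the public key, the verification key ${\sf vk}=e$, and the single bit telling whether the client's output is $\bot$. I would first erase the dependence on $f$ by a two-step hybrid: FSS security lets us replace $\{q_j\}_{j\in T}$ by shares of a fixed dummy function of the same domain, and then (since $g$ has the same domain as $f$) replace $\{q'_j\}_{j\in T}$ likewise; this is the argument of \cite{BGI15,CNCWF23}. The selective-failure ingredient is that the accept/reject bit is itself computable from the transcript: writing $\gamma=\sum_{j\in T}(a_j-\check a_j)$ and $\delta=\prod_{j\in T}(a'_j/\check a'_j)$ for the gaps between the correct answers of the corrupted servers (which the adversary recomputes from its queries, ${\bm X}$ and ${\bm\omega}$) and the answers it actually sends, the client accepts exactly when $\delta^{e}\equiv\xi^{\gamma}\bmod n$, an event depending only on public data and on the adversary's own messages. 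Hence a simulator reproduces the bit, and privacy (robust to selective-failure attacks) follows from FSS security.

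For security, set $\epsilon_{\Pi_3}(\lambda)=\Pr[{\sf Exp}^{\rm Ver}_{{\mathcal A},\Pi_3}(T,\lambda)=1]$ and assume toward a contradiction that it is non-negligible. As in Theorem~\ref{thm:2}, in a winning run the honest reconstruction would yield $(m,\tau)$ with $m=f({\bm X})$, $\tau=\xi^{dm}$, $\xi^{m}\equiv\tau^{e}$, while the malicious run yields an accepted pair $(\check m,\check\tau)$ with $\check m\ne m$ and $\xi^{\check m}\equiv\check\tau^{e}\bmod n$, and both $\gamma:=m-\check m=\sum_{j\in T}(a_j-\check a_j)$ and $\delta:=\tau\check\tau^{-1}=\prod_{j\in T}(a'_j/\check a'_j)$ are extractable from the corrupted servers' correct and actual answers. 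Dividing the two verification identities yields $\xi^{\gamma}\equiv\delta^{e}\bmod n$ with $\gamma\ne 0$. Taking $e$ prime and the field small enough that $|\gamma|<e$, so $\gcd(\gamma,e)=1$, Bezout coefficients $u\gamma+ve=1$ give
\[
\bigl(\delta^{u}\xi^{v}\bigr)^{e}=\bigl(\delta^{e}\bigr)^{u}\xi^{ve}=\bigl(\xi^{\gamma}\bigr)^{u}\xi^{ve}=\xi^{u\gamma+ve}=\xi\pmod n,
\]
so $\delta^{u}\xi^{v}$ is an $e$-th root of $\xi$ modulo $n$; since $\xi$ is drawn uniformly from $\mathbb Z_n^\times$ in ${\sf KeyGen}$, this breaks the RSA assumption, and the contradiction forces $\epsilon_{\Pi_3}(\lambda)$ to be negligible.

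The main obstacle is turning this last step into a genuine reduction rather than a leakage observation: a reduction handed an RSA instance $(n,e,y)$ and setting $\xi:=y$ must still run ${\sf Query}$ (which needs the trapdoor $d=e^{-1}\bmod\phi(n)$ to build $g={\sf sk}\cdot f$ and its shares) and must supply the honest servers' tag answers, whose product is $\xi^{dm}=(\xi^{m})^{1/e}$ — itself an $e$-th root and hence exactly the hard problem. I expect the cleanest way around this is to invoke the privacy already established to swap the corrupted servers' tag shares $\{q'_j\}_{j\in T}$ for dummies so the reduction no longer needs $d$ when answering ${\mathcal A}$, and then argue that this perturbs $\epsilon_{\Pi_3}$ only negligibly while keeping the extracted relation $\xi^{\gamma}\equiv\delta^{e}$ and the honest-path identity consistent; failing that, one states the hardness in a structured form where the challenger is allowed to know $d$. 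Secondary care is needed with the ${\mathbb F}\to\mathbb Z_n$ exponent embedding and with the degenerate case $e\mid\gamma$, excluded by the parameter choice. That simulation step, not the algebra, is where the real effort lies.
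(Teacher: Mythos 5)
Your proposal is correct in outline and follows the paper's three-part template (correctness from FSS correctness plus the RSA identity $\xi^{m}\equiv\tau^{e}\bmod n$, privacy from FSS security, security by contradiction with the RSA assumption via the forged verification equation), but at the decisive security step you take a genuinely different—and sounder—route than the paper. The paper's proof divides the two verification identities and then simply asserts that the adversary ``learns a $d$ with $de\equiv 1\bmod\phi(n)$'' and solves RSA by outputting $d$; it never explains how $d$ could be computed from an accepted forgery, which is the weak point of its sketch. You instead extract the relation $\xi^{\gamma}\equiv\delta^{e}\bmod n$ with $\gamma\neq 0$ and apply Bezout (Shamir's trick) to obtain an $e$-th root of $\xi$, which is the standard and correct way to contradict RSA; you also confront the simulation problem the paper silently ignores—a reduction holding an RSA instance $(n,e,y)$ cannot form $g=d\cdot f$ or the honest tag answers—and your proposed fix (use FSS security to replace the corrupted $g$-shares by dummy shares) is the right one. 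It goes through cleanly once you observe, as you essentially do in your privacy paragraph, that the winning event is equivalent to ``$\gamma\neq 0$ and $\xi^{\gamma}\equiv\delta^{e}$,'' a predicate computable from the corrupted servers' shares and the adversary's answers alone, so the reduction never needs to materialize the honest servers' tags at all; thus your ``main obstacle'' is lighter than you fear. Two caveats: your coprimality hypothesis ($e$ prime and $|\gamma|<e$) is not guaranteed by the scheme's ${\sf KeyGen}$, which samples a random $d$ and sets $e\equiv d^{-1}\bmod\phi(n)$, so you must either adjust the parameter choice or argue $\gcd(\gamma,e)=1$ holds with overwhelming probability (you flag this, correctly, as a loose end, along with the $\mathbb{F}\to\mathbb{Z}_n$ exponent embedding); and in simulating the accept bit you should note that the $\check a'_j$ must be units modulo $n$. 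Your correctness and privacy arguments are consistent with, and more detailed than, the paper's, which merely defers to Theorem~\ref{thm:2} and \cite{BGI15,CNCWF23}; on balance your sketch is more rigorous than the paper's own proof of this theorem.
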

	
	\begin{proof}
		For \emph{correctness} and \emph{privacy}, the proof follows that in {Theorem}~\ref{thm:2}.
		we can reduce the security of $\Pi_3$ to the security of the RSA assumption.
		
		\vspace{2mm}
		For \emph{security}, considering a challenge ${\sf Exp}^{\rm Ver}_{{\mathcal A}, \Pi_3}(T, \lambda)$ adapted from {Fig.}~\ref{fig:sec of VPIR} between adversary ${\mathcal A}$ and its challenger,  
		we need to show 
		$$\epsilon_{\Pi_3}(\lambda) := \Pr[{\sf Exp}^{\rm Ver}_{{\mathcal A}, \Pi_2}(T, \lambda)=1] $$
		is negligible.
		Without loss of generality, we assume that $T=[t]$.
		Note that the correctness of $\Pi_3$ ensures that the adversary ${\mathcal A}$ can always compute and output correct 
		values $({\bf a}_1,\ldots,{\bf a}_t)$. 		
		Given $({\bf a}_1,\ldots,{\bf a}_t,{\bf a}_{t+1},\ldots, {\bf a}_k)$, 
		the ${\Pi_1}.\sf Reconstruct$ algorithm 
		outputs an $m=\sum_{i\in [N]} f(i,\cdot) = f({\bm X})$ and $\xi^{m}\equiv \tau^{\sf vk} \mod{n}$,
		whereas in the experiment ${\sf Exp}^{\rm Ver}_{{\mathcal A}, \Pi_3}(T, \lambda)$, 
		malicious adversary $\mathcal{A}$ usually outputs wrong answers $(\check{{\bf a}}_1,\ldots,\check{{\bf a}}_t)$.

		Assume for contradiction that $\epsilon_{\Pi_3}(\lambda)$is non-negligible.
		upon receiving $(\check{{\bf a}}_1,\ldots,\\ \check{{\bf a}}_t,{\bf a}_{t+1},\ldots,\check{{\bf a}}_k)$, 
		the client runs ${\Pi_3}.\sf Reconstruct$ algorithm and successfully outputs an $(\check{m}, \check{\tau})$ such that $\check{m}\neq m$ 
		and $\xi^{\check{m}}\equiv \check{\tau}^e \mod n$.
		This implies an information leak, since the adversary can learn such a $d$ that
		 $de\equiv 1 \mod \phi(n)$:
		 
		\vspace{-2mm}
		$$\xi^{de(m-\check{m})} \equiv \xi^{m-\check{m}} \mod n \quad \Rightarrow\quad
		de \equiv 1 \mod \phi(n).
		$$
		
		\vspace{-2mm}
		\noindent
		As the adversary $\mathcal{A}$ already knows the public verification key ${\sf vk} = e$, 
		$\mathcal{A}$ is able to 
		solve the RSA problem $((\xi^d)^e,\xi)$ by outputting correct $d$ with non-negligible probability $\epsilon_{\Pi_3}(\lambda)$.
		Therefore, the assumption is incorrect and the scheme $\Pi_3$ is $t$-secure under the RSA assumption.  
	\end{proof}

	\subsection{Point Query}
	We now consider point queries, which can be viewed as a special case of predicate queries.
	The protocol for point queries follows naturally from the general predicate query constructions.
	For clarity, we present the key parameters and protocol steps, omitting detailed proofs as they directly follow from the predicate query analysis.

	\vspace{-2mm}
	\paragraph{\bf Construction 3.} 
	The point query construction $\Pi_3$
	is defined over a finite field $\mathbb{F}$.	
	The database ${\bm X}=({\bm x}_1,\ldots,{\bm x}_N)$ is of size $N$, with length of each entry being $\ell\in {\mathbb N}$.
	With function class ${\mathcal F}$ being closed under scalar multiplication, we set ${\mathcal F}=\{f^{(1)},\ldots,f^{(N)}\} \subseteq {\sf Funs}[[N]\times \{0,1\}^{\ell}, \mathbb{F}]$, 
	$f^{(i)}(i,\cdot)=1$ and $f^{(i)}(i',\cdot)=0$ for all 
	$i'\neq i$. 
	The weights are the database entries themselves, 
	i.e., ${\bm \omega}_i={\bm x}_i \in \{0,1\}^\ell \subset \mathbb{F}$, for $i \in [N]$.

	In the $\sf KeyGen$ phase, ${\sf KeyGen}(1^{\lambda}) \rightarrow ({\sf pk},{\sf sk})$, 
	the client chooses a prime order group $\mathbb{G}$, with the order $q_{\mathbb{G}}$ and a generator $\xi \in \mathbb{G}$, the public key ${\sf pk}$ is set as $(\mathbb{G}, q_{\mathbb{G}}, \xi)$, 
	the secret key is set to be $\bot$. 
	
	In the $\sf Query$ phase, {\sf Query}$({\sf pk}, {\sf sk}, f) \rightarrow ({\bm q}_1,{\bm q}_2,\ldots, {\bm q}_k,{\sf vk}) $,
	the input function $f=f^{(\iota)}$ is chosen due to the index $\iota$ of the database entry ${\bm x}_\iota$ that the client wants to retrieve. 
	The client then splits the function $f^{(\iota)}$ into $k$ additive function shares $(q_1,\ldots,q_k)$ such that $\sum_{j\in [k]} q_j=f^{(\iota)}$. 
	For verification, choose a random non-zero field element $\alpha\in \mathbb{F}_{q_{\mathbb{G}}}$ and 
	compute a function $g^{(\iota)}$ such that $g^{(\iota)} = \alpha \cdot f^{(\iota)} \in \mathcal{F}$, 
	then split the $g^{(\iota)}$ into $k$ shares $(q'_1,\ldots,q'_k)$ such that $\sum_{j\in [k]} q'_j=g^{(\iota)}$. 
	The output of the $\sf Query$ phase is a pair of function shares ${\bm q}_j=(q_j, q'_j)$ for each server $j\in [k]$ and a public verification key ${\sf vk}=\xi^\alpha$.
	Thanks to the discrete logarithm (DL) assumption,
	given $\sf pk$ and $\sf vk$, 
	it is infeasible for anyone to compute $\alpha$ from $\xi^\alpha$, if $q_{\mathbb{G}}$ is large enough. 
	
	In the $\sf Answer$ phase, ${\sf Answer}({\sf pk},{\bm X},{\bm \omega}, {\bm q}_j)\rightarrow {\bm a}_j$,
	the $j$th server $\mathcal{S}_j$ evaluates the function shares on the database ${\bm X}$ and returns the answer ${\bm a}_j$ to the client. 
	By evaluating the function shares $(q_j,q'_j)$ at each database entry ${\bm X}_i,i\in [N]$, the server $j$ computes the answer ${\bm a}_j=(a_j,a'_j)$ as follows: 
	\begin{displaymath}
		a_j = \sum_{i\in[N]} {\bm \omega}_i \cdot q_j(i,\cdot) \in \mathbb{F}, \quad a'_j = \sum_{i\in[N]} {\bm \omega}_i \cdot q'_j(i,\cdot) \in \mathbb{F},
	\end{displaymath}
	where $q_j(i,\cdot)={\sf FSS.Eval}(j,q_j,i)$ 
	and $q'_j(i,\cdot)={\sf FSS.Eval}(j,q'_j,i)$.
	By the definition of $q_j$ and $q'_j$, we can see that 
	\begin{equation}
		\label{eq:pointanser}
		\begin{aligned}
			\sum_{j\in [k]}a_j &= \sum_{j\in [k]} \sum_{i\in[N]} {\bm \omega}_i \cdot q_j(i,\cdot) &= \sum_{i\in[N]} {\bm x}_i \cdot f^{(\iota)}(i,\cdot) &= {\bm x}_\iota, \\
			\sum_{j\in [k]}a'_j &= \sum_{j\in [k]}\sum_{i\in[N]} {\bm \omega}_i \cdot q'_j(i,\cdot) &= \sum_{i\in[N]} {\bm x}_i \cdot g^{(\iota)}(i,\cdot) &= \alpha \cdot {\bm x}_\iota.
		\end{aligned}
	\end{equation}
	
	In the $\sf Reconstruct$ phase, 
	${\sf Reconstruct}({\bm a}_1,{\bm a}_2,\ldots, {\bm a}_k,{\sf pk},{\sf vk}) \rightarrow m$,
	the client reconstructs the answer by adding the answers from all servers, i.e., ${\bm a} = ({\bm a}^{(0)},{\bm a}^{(1)}) = \sum_{j\in[k]} {\bm a}_j = (\sum_{j\in [k]}a_j,\sum_{j\in [k]}a'_j)$. 
	By checking the equation 
	${\sf vk}^{({\bm a}^{(0)})}=\xi^{({\bm a}^{(1)})}$, 
	the client can verify the correctness of the answer. 
	If the verification passes, the client accepts the answer $m={\bm a}^{(0)}=\sum_{i\in[N]} {\bm x}_i \cdot f^{(\iota)}(i,\cdot)$ as the output of the query.
	
	\begin{theorem}\label{thm:point-query}
		For the $k$-server verifiable PIR scheme  
		$\Pi_3 = ({\sf KeyGen}, {\sf Query}, {\sf Answer},\\ {\sf Reconstruction})$ 
		defined on function class ${\mathcal F}\subseteq {\sf Funs}[[N]\times \{0,1\}^{\ell},\, \mathbb{F}]$ 
		and database ${\bm X} = ({\bm {x}}_1,{\bm {x}}_2, \ldots, {\bm {x}}_N)\in \{0,1\}^{N\times \ell}$,
		given the queried index is $\iota\in [N]$, weights vector ${\bm \omega}={\bm X}$, and the function $f\in \mathcal{F}$ defined as $f(\iota,\cdot)= 1$ and $f(x,\cdot)= 0$ if $x \neq \iota$, 
		$\Pi_3$ is correct, private, and secure as Definition \ref{def:vpir}.
	\end{theorem}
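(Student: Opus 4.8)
The plan is to obtain \emph{correctness}, \emph{privacy}, and \emph{security} of $\Pi_3$ directly by specializing the analysis of Theorem~\ref{thm:2}: $\Pi_3$ is exactly the DL-based predicate construction of Fig.~\ref{construction: DL-based} run with the point-function family ${\mathcal F} = \{f^{(1)},\dots,f^{(N)}\}$ and weight vector ${\bm \omega} = {\bm X}$. Accordingly the proof splits into three short arguments, two of which are transcriptions and one of which (the selective-failure step) carries the only real content.

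For correctness I would invoke FSS correctness on the two ${\sf FSS.Gen}$ instances, which by Equation~\eqref{eq:pointanser} gives ${\bm a}^{(0)} = \sum_{i\in[N]} {\bm x}_i f^{(\iota)}(i,\cdot) = {\bm x}_\iota$ and ${\bm a}^{(1)} = \alpha\cdot {\bm x}_\iota$; hence ${\sf vk}^{{\bm a}^{(0)}} = (\xi^{\alpha})^{{\bm x}_\iota} = \xi^{\alpha {\bm x}_\iota} = \xi^{{\bm a}^{(1)}}$, so the check in $\sf Reconstruct$ passes and the output $m = {\bm x}_\iota$ is precisely the prescribed weighted value $\sum_{i\in[N]} {\bm \omega}_i f(i,{\bm x}_i)$. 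For privacy I would argue two points. First, the $|T|\le k-1$ corrupt query shares $\{(q_j,q'_j)\}_{j\in T}$ hide the index: a standard hybrid over the two independent ${\sf FSS.Gen}$ calls (shares of $f^{(\iota)}$ and of $g^{(\iota)} = \alpha f^{(\iota)}$) together with FSS security makes them indistinguishable from shares for any other index $\iota'$, and ${\sf vk} = \xi^{\alpha}$ with $\alpha$ uniform in $\mathbb{F}_{q_{\mathbb G}}\setminus\{0\}$ independently of $\iota$ adds nothing. Second, the accept/reject bit the adversary additionally observes is independent of $\iota$: if the corrupt servers shift their combined answer by $(\epsilon,\epsilon')$, the client accepts iff $\alpha\epsilon \equiv \epsilon' \pmod{q_{\mathbb G}}$; for $\epsilon = 0$ this is the $\iota$-independent event $\epsilon' = 0$, and for $\epsilon \neq 0$ a PPT adversary holding only $\xi^{\alpha}$ makes it hold with at most negligible probability (otherwise it would recover $\alpha$), so the verification outcome never depends on the queried index — this is the selective-failure resistance required by Definition~\ref{def:vpir}.

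For security I would replay the reduction from the proof of Theorem~\ref{thm:2}. Consider ${\sf Exp}^{\mathrm{Ver}}_{{\mathcal A},\Pi_3}(T,\lambda)$ with $T=[t]$; by correctness the honest run yields $(m,\tau)$ with $m = {\bm x}_\iota$ and $\xi^{\tau} = (\xi^{\alpha})^{m}$. If $\epsilon_{\Pi_3}(\lambda) = \Pr[{\sf Exp}^{\mathrm{Ver}}_{{\mathcal A},\Pi_3}(T,\lambda)=1]$ were non-negligible, then with that probability ${\mathcal A}$'s corrupted answers make $\sf Reconstruct$ output $(\check m,\check\tau)$ with $\check m \neq m$ and $\xi^{\check\tau} = (\xi^{\alpha})^{\check m}$; dividing the two relations gives $\xi^{\tau-\check\tau} = \xi^{\alpha(m-\check m)}$, and since the value space is $\mathbb{F} = \mathbb{F}_{q_{\mathbb G}}$ the nonzero $m-\check m$ is invertible, so $\alpha = (\tau-\check\tau)(m-\check m)^{-1} \bmod q_{\mathbb G}$. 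This turns ${\mathcal A}$ into a discrete-logarithm solver for $(\xi,\xi^{\alpha})$ with non-negligible advantage, contradicting the DL assumption. The main obstacle is the selective-failure part of privacy — making rigorous that exposing the verification bit to a $(k-1)$-server coalition does not leak $\iota$, in particular dispatching the $\epsilon\neq 0$ case via DL; everything else is a mechanical specialization of the argument in Theorem~\ref{thm:2} with $f$ replaced by $f^{(\iota)}$ and ${\bm \omega}$ by ${\bm X}$.
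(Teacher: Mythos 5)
Your proposal is correct and follows exactly the route the paper itself prescribes: the paper's proof of Theorem~\ref{thm:point-query} is a single sentence stating that it ``closely parallels that of Theorem~\ref{thm:2}'' and is left to the reader, and your argument is precisely that specialization of the DL-based predicate-query analysis to the point-function family $\{f^{(1)},\dots,f^{(N)}\}$ with weights ${\bm \omega}={\bm X}$, reusing equation~\eqref{eq:pointanser} for correctness and the same DL reduction (recovering $\alpha$ from a forged $(\check m,\check\tau)$) for security. The one place you add detail beyond what appears anywhere in the paper is the explicit accept/reject-bit analysis for selective-failure resistance (accept iff $\alpha\epsilon=\epsilon'$, with the $\epsilon\neq 0$ case dispatched via DL), which the paper leaves implicit even in its proof sketch of Theorem~\ref{thm:2}; this is the argument the paper evidently has in mind, just spelled out.
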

	
	\vspace{-1mm}
	The proof is straightforwards and closely parallels that of Theorem~\ref{thm:2}, we therefore leave it to the reader.
	
	\vspace{-2mm}
	\section{Performance Evaluation}
	\label{sec:Implementation}
	In this section, we experimentally evaluate all the three verifiable PIR protocols proposed in \Cref*{sec:construction},
	all experiments are conducted under the two-server setting from the literature.
	Although the protocol is formally defined for the general multi-server setting, our experiments focuses on the two-server case. In what follows, \textit{multi-server} refers to the defined model, while \textit{two-server} denotes our concrete realization.
	The chosen parameters ensure that each protocol achieve a security level of 124 bits.

	{\bf Parameters.} 
	The computational performance of these protocols is measured on the Ubuntu 20.04 with 16GB unified memory and 6-cores 4.0GHz Intel Xeon E5-2286G CPU made by GenuineIntel.
	All codes are executed using GO 1.24.1.

	We instantiate our publicly verifiable PIR protocol  $\Pi_1$
	and $\Pi_2$ for predicate query and $\Pi_3$.
	The message field is chosen to be $\mathbb{F}^4_p$ with $p=2^{32}-1$.
	In $\Pi_1$ scheme, based on the DL assumption,  
	we choose the cyclic group $\mathbb{G}$ of order being $p-1$, where $p$ is a safe parameter of 3072 bits.
	In $\Pi_2$ scheme, based on the RSA assumption, the module $n=pq$ is chosen to be of 3072 bits.
	
	{\bf Experimental methodology.} 
	Our evaluation focuses on key performance metrics, including user and server computation time, communication overhead, and scalability with respect to database size. 
	Each protocol is executed 50 times with randomly chosen queries, and averaged results are reported to ensure statistical reliability. 
	User computation time includes both query generation and reconstruction, while communication cost covers query upload and answer download. All experiments are conducted on the same physical machine under consistent settings, though minor system-level fluctuations (e.g., background processes) may cause negligible timing variance.

	\subsection{Multi-server predicate query}	

	Unlike prior two-server VPIR schemes for predicate queries~\cite{CNCWF23}, which provide only private verifiability, our constructions achieve public verifiability through lightweight public-key operations, enabling third-party verification while keeping the cost—primarily group exponentiations practical.

	\vspace{-6mm}
	\begin{figure}	
		\centering
		\includegraphics[scale=0.6]{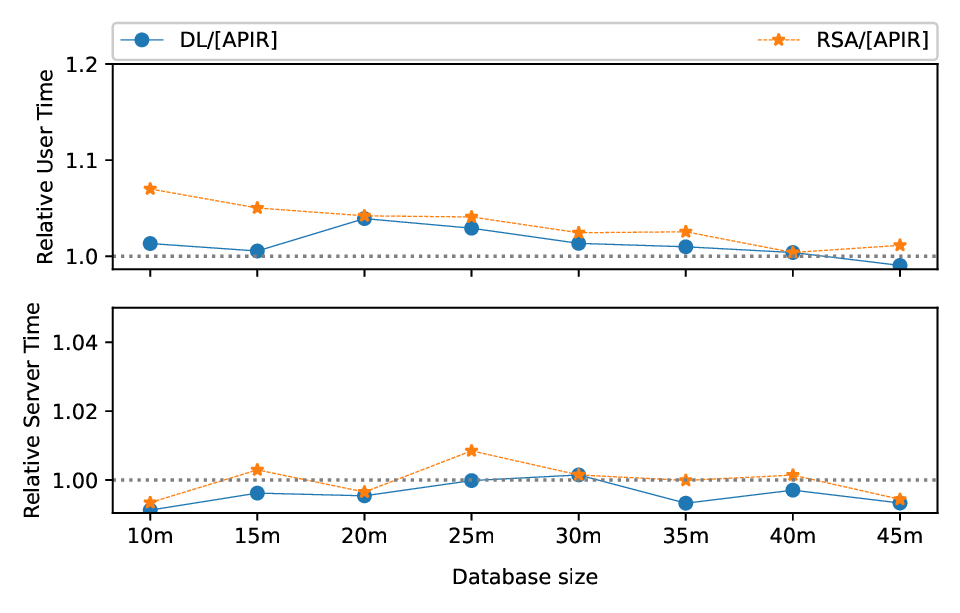}
		\vspace{-3mm}
		\caption{\small 
			The relative user time and server time comparison between DL based scheme and APIR\cite{CNCWF23}, RSA based scheme and APIR\cite{CNCWF23} in two-server setting. } 
		\label{perf:dlp_rsa_vs_fss}
	\end{figure}
	\vspace{-3em}
	
	Experimental results for both schemes $\Pi_1$ and $\Pi_2$ show that the added computation increases user time, introduced by public-key operations, by at most 1.1$\times$ and 1.05$\times$, and server time by only 1.02$\times$. 
	As illustrated in {Fig.}~\ref{perf:dlp_rsa_vs_fss}, across database sizes from 10 to 45 million items, the overhead remains comparable to privately verifiable baselines, confirming that public verifiability can be realized with minimal performance impact.

	\subsection{Multi-server Point Query}		
	From a theoretical standpoint, the communication complexity of the two-server scheme $\Pi_3$ grows only mildly with the database size $N$. In contrast, the Merkle tree–based scheme\cite{CNCWF23} incurs steadily increasing bandwidth overhead, as each response must be accompanied by a Merkle proof. Furthermore, the augmented database grows super-linearly relative to the original, imposing significant server-side memory costs (see Appendix~\ref{fig:Database size of MerkleTree}).
	
	\vspace{-2em}
	\begin{figure}[!htp]
		\centering
		\includegraphics[scale=0.65]{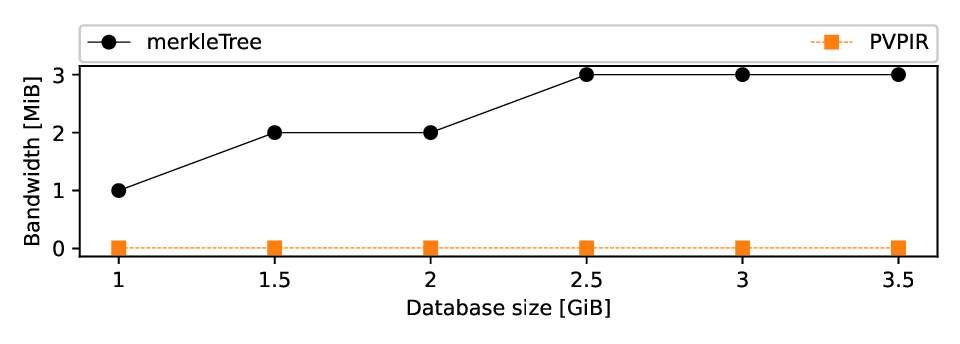}
		\vspace{-1em}
		\caption{\small
			Bandwidth comparison between  Merkle tree-based VPIR\cite{CNCWF23} and PVPIR $\Pi_3$ for point query.
			The overhead of retrieving a 256-Byte data item applying the Merkle tree-based VPIR scheme\cite{CNCWF23} and the our VPIR scheme $\Pi_3$ in two-server setting.
		} 
		\label{perf:FSS-vs-MerkleTree}
	\end{figure}
	
	\vspace{-3em}
	\begin{figure}[!h]
		\centering
		\vspace{-1em}
		\includegraphics[scale=0.6]{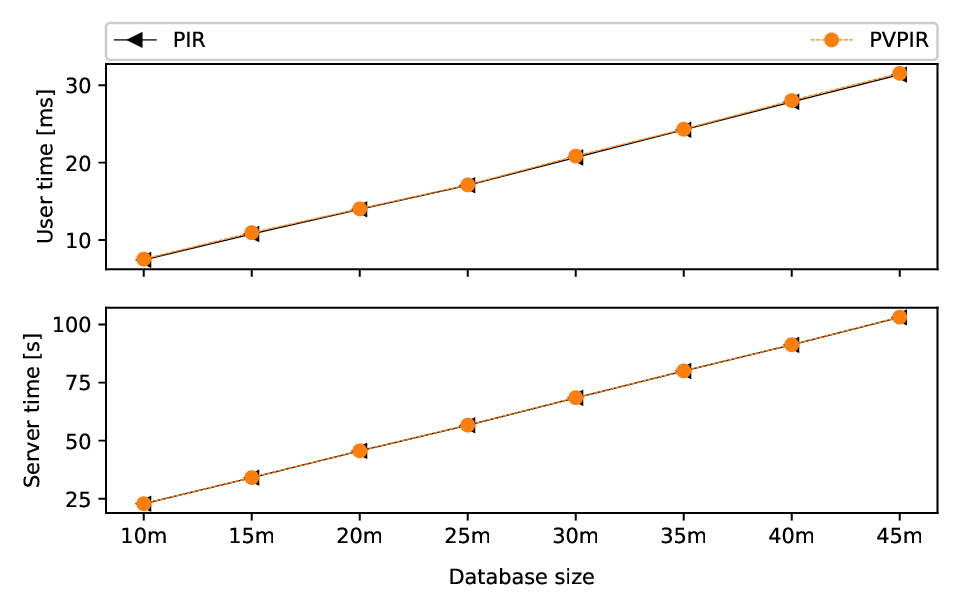}
		\vspace{-1em}
		\caption{\small Time consumption comparison between PVPIR $\Pi_3$ scheme and classic PIR scheme in two-server setting.
		Here the size of database is measured by the amount of database entries and `{\sf m}' means `million'.}
		\label{perf:pir_vs_vpir}
	\end{figure}

	In $\Pi_3$, the query and response sizes are $\mathcal{O}(\lambda 2^{\ell/2})$ and $\mathcal{O}(2\ell)$, respectively, where $\ell$ denotes the bit-length of each data item. Both quantities are independent of $N$, highlighting the scalability and communication efficiency of our construction.
	{Fig.}~\ref{perf:FSS-vs-MerkleTree} compares the bandwidth overhead of our verifiable PIR scheme $\Pi_3$ with the Merkle tree–based approach for point queries. As shown, $\Pi_3$ incurs significantly lower communication cost, especially as the database grows, demonstrating superior scalability. This efficiency arises because each point query is mapped to a fixed-size hidden predicate query, making the bandwidth independent of $N$.		
	
	The dominant computational overhead in $\Pi_3$ arises from evaluating the verification function $g$, which ensures the correctness of the retrieved result. As illustrated in {Fig.}~\ref{perf:pir_vs_vpir}, when the database size increases from 10 million to 45 million items, the additional cost of verifiability remains minimal. 
	This shows that the overhead of integrating verification into the FSS-based scheme is marginal, confirming its practical efficiency.

	\section{Conclusion}
	\label{sec:conclusion}
	In this paper, we have studied publicly verifiable private information retrieval (PVPIR) protocols for predicate query and point query in the multi-server setting, achieving both query privacy and result correctness. 
	We have constructed efficient PVPIR protocols based on DL and RSA assumptions for predicate query.
	For the point query, our construction introduces a lightweight verification mechanism with minimal computational overhead and significantly lower communication cost compared to existing Merkle tree–based solutions. 
	Both theoretical analysis and empirical evaluation confirm that our protocol sustains stable bandwidth usage and strong scalability, making it well-suited for practical deployment in large-scale privacy-preserving data retrieval.

	While our constructions provide efficient publicly verifiable PIR for predicate and point queries, several directions remain open. Future work includes improving efficiency for very large databases, supporting dynamic updates, and enhancing robustness against fully malicious or adaptive adversaries.
	
	Another promising direction is exploring hybrid verifiability schemes that balance public and private auditing, as well as applying PVPIR protocols to real-world systems such as federated analytics, blockchain-based storage, and secure multi-party computation. These efforts can further enhance the practicality, flexibility, and security of publicly verifiable PIR.

	\begin{credits}
		\subsubsection{\ackname}
			This work is supported by the Shenzhen-Hong Kong Joint Funding Project (Category A), under grant No. SGDX20240115103359001.
	\end{credits}

	\newpage
	%
	
	\bibliographystyle{Style/splncs04}
	\bibliography{Bib/vpir2025}

	\begin{subappendices}
		\renewcommand{\thesection}{\Alph{section}}		
		
		\section{Assumptions for PVPIR Constructions}
		\vspace{-1mm}
		\begin{definition}{\bf (DL assumption)} 
			Let \( \mathbb{G} \) be a cyclic group of prime order \( q \) generated by an element \( g \). 
			The \emph{Discrete Logarithm (DL) Assumption} states that, given \( g \in \mathbb{G} \) and \( h = g^x \in \mathbb{G} \) for a uniformly random \( x \in \mathbb{Z}_q \), 
			it is computationally infeasible for any probabilistic polynomial-time (PPT) adversary \( \mathcal{A} \) to output \( x \). 
			
			Formally, for any PPT algorithm \( \mathcal{A} \), the advantage
			$\Pr\left[ \mathcal{A}(g, g^x) = x \right]$
			is negligible in the security parameter \( \lambda \), where the group \( \mathbb{G} \) is generated by a group generation algorithm \( \mathcal{G}(1^\lambda) \).
			
		\end{definition}
		
		\begin{definition}{\bf (RSA assumption)}
			Let \( \mathsf{RSA}_{\lambda} \) be an algorithm that outputs an RSA modulus \( n = pq \) for two random \( \lambda \)-bit primes \( p, q \), 
			and let \( e \in \mathbb{Z}_n^{\times} \) be a public exponent such that \( \gcd(e, \phi(n)) = 1 \). 
			The \emph{RSA Assumption} asserts that, given \( n, e \), and a random \( y \in \mathbb{Z}_n^{\times} \), 
			it is computationally infeasible for any PPT adversary \( \mathcal{A} \) to compute \( x \in \mathbb{Z}_n^{\times} \) such that \( x^e \equiv y \pmod{n} \).
			
			Formally, for any PPT adversary \( \mathcal{A} \), the success probability
			\[
			\Pr\left[ \mathcal{A}(n, e, y) = x \text{ such that } x^e \equiv y \pmod{n} \right] \leq {\sf negl}(\lambda).
			\]
		\end{definition}

		\section{Multi-server PVPIR Definitions}
		\vspace{-1mm}
		\label{app:pvpir}
		\begin{definition}
			A multi-server publicly verifiable PIR scheme $\Pi = ({\sf KeyGen}, {\sf Query}$, ${\sf Answer}, {\sf Reconstruct})$ is correct if for all function class $\mathcal{F}\subseteq {\sf Funs}[[N]\times\{0,1\}^\ell,\mathbb{F}]$, database $\bm X=({\bm x}_1,\ldots, {\bm x}_N)\in \{0,1\}^{N\times \ell}$, weights ${\bm \omega}=({\bm \omega}_1,\ldots, {\bm \omega}_N)\in \mathbb{F}^N$, and $\lambda \in \mathbb{N}, f\in \mathcal{F}$, satisfies the following requirements:
			\begin{itemize}
				\vspace{1mm}
				\item[$\bullet$] {\bf Correctness}. 
				The correctness requires that the reconstruction algorithm outputs the correct result if all algorithms are honestly executed. That is, 
				\begin{displaymath}
					\Pr\left[
					\begin{matrix}
						({\sf pk},{\sf sk}) \leftarrow {\sf KeyGen}(1^{\lambda}) \\
						({\bm q}_1,{\bm q}_2,\ldots, {\bm q}_k,{\sf vk}) \leftarrow {\sf Query}({\sf pk},{\sf sk}, f) \\ 
						({\bm a}_1,{\bm a}_2,\ldots, {\bm a}_k) \leftarrow {\sf Answer}({\sf pk}, {\bm X},{\bm \omega}, {\bm q}_j) \\
						m\leftarrow {\sf Reconstruct}({\bm a}_1,{\bm a}_2,\ldots, {\bm a}_k,{\sf pk},{\sf vk}): \\
						m\in \{\sum_{i\in [N]} {\bm \omega}_i \cdot f({\bm x}_i), \perp\}
					\end{matrix}\right] = 1
				\end{displaymath}
				where the probability is computed over all the random coins used by the algorithms of the scheme.
				
				\vspace{1mm}
				\item[$\bullet$] {\bf Privacy}.	
				Our scheme is private against the selective failure attack, formally, for any adversary ${\mathcal A}=({\mathcal A}_0,{\mathcal A}_1)$ who controls $k-1$ servers, 
				and an honest server whose index is $j_{\sf good}\in [k]$, we define the distribution 
				\begin{displaymath}
					\mathsf{REAL}_{\mathcal{A}, j_{\mathsf{good}}, f, \lambda, {\bm X}, {\bm w}} = 
					\left\{
					\hat{\beta} : 
					\begin{aligned}
						&({\sf pk},{\sf sk})\leftarrow {\sf KeyGen}(1^\lambda)\\
						&({\bm q}_1, \ldots, {\bm q}_k,{\sf vk}) \leftarrow \mathsf{Query}({\sf pk},{\sf sk}, f) \\
						&{\bm a}_{j_{\mathsf{good}}} \leftarrow \mathsf{Answer}({\sf pk}, {\bm X},{\bm \omega}, {\bm q}_{j_{\mathsf{good}}}) \\
						&({\sf st}_\mathcal{A},\{{\bm a}_j\}_{j \ne j_{\mathsf{good}}}) \leftarrow \mathcal{A}_0({\sf pk},{\bm X},{\bm w}, \{{\bm q}_j\}_{j \ne j_{\mathsf{good}}}) \\
						&y \leftarrow \mathsf{Reconstruct}({\bm a}_1, \ldots, {\bm a}_k,{\sf vk}) \\
						&b = 1 \text{ if } y \ne \bot \text{ else } 0\\
						&\hat{\beta} \leftarrow \mathcal{A}_1({\sf st}_\mathcal{A},b)
					\end{aligned}
					\right\}.
				\end{displaymath}
				Similarly, for a simulator ${\bm S}=({\bm S}_0,{\bm S}_1)$, 
				define the distribution 
				\begin{displaymath}
					\mathsf{IDEAL}_{\mathcal{A}, {\bm S},\mathcal{F}, \lambda, {\bm X}, {\bm w}} = 
					\left\{
					\beta : 
					\begin{aligned}
						&({\sf pk},{\sf sk})\leftarrow {\sf KeyGen}(1^\lambda)\\
						&({\bm q}_1, \ldots, {\bm q}_k,{\sf st}_{\bm S}) \leftarrow {\bm S}_0({\sf pk},{\sf sk}, \mathcal{F},{\bm X},{\bm \omega}) \\
						&({\bm a}_1, \ldots, {\bm a}_k,{\sf st}_{\mathcal{A}})\leftarrow \mathcal{A}_0({\sf pk}, {\bm X},{\bm \omega}, {\bm q}_{j_{\mathsf{good}}}) \\
						&b \leftarrow {\bm S}_1({\bm a}_1, \ldots, {\bm a}_k,{\sf st}_{\bm S}) \\
						&\beta \leftarrow \mathcal{A}_1({\bm a}_1, \ldots, {\bm a}_k,{\sf st}_\mathcal{A})
					\end{aligned}
					\right\}.
				\end{displaymath}
				We say $\Pi$ is private if for every efficient adversary ${\mathcal A}=(\mathcal{A}_0,\mathcal{A}_1)$, 
				any database ${\bm X}$, and weights ${\bm \omega}$, 
				there exists a simulator ${\bm S}=({\bm S}_0,{\bm S}_1)$ such that for all $\lambda \in \mathbb{N}$, $f\in \mathcal{F}$, 
				$j_{\sf good}\in [k]$, the following holds: 
				\begin{displaymath}
					\mathsf{REAL}_{\mathcal{A}, j_{\mathsf{good}}, f, \lambda, {\bm X}, {\bm w}} \approx_c \mathsf{IDEAL}_{\mathcal{A}, {\bm S},\mathcal{F}, \lambda, {\bm X}, {\bm w}} = 1.
				\end{displaymath}
				\item[$\bullet$] {\bf Security}.
				Consider the experiment 
				${\sf Exp}^{\rm Ver}_{{\mathcal A}, \Pi}(T, \lambda)$
				of challenge for corrupted parties $T \subset [k]$ of cardinality $t<k$,	
				for all probabilistic polynomial time (PPT) adversary ${\mathcal A}$, $\Pr[{\sf Exp}^{\rm Ver}_{{\mathcal A}, \Pi}(T, \lambda)=1] \leq {\sf negl}(\lambda)$.
			\end{itemize}
			
		\end{definition}
		
		\vspace{-2em}
		\begin{figure}[htbp]
			\begin{center}
				\begin{boxedminipage}{11cm}
					\begin{enumerate}
						\item[{\rm (a)}] The adversary ${\mathcal A}$ choose a database ${\bm X}$ and query function $f\in \mathcal{F}$.
						
						\vspace{1mm}
						\item[{\rm (b)}] The challenger executes $({\sf pk}, {\sf sk}) \leftarrow {\sf KeyGen}(1^\lambda)$,
						$({\bm q}_1,{\bm q}_2,\ldots, {\bm q}_k, {\sf vk}) \leftarrow {\sf Query}({\sf pk},{\sf sk}, f)$ 
						and ${\bm a}_{j} \leftarrow {\sf Answer}({\sf pk},{\bm X},{\bm q}_{j})$ for $j \notin T$.
						Then it sends queries $\{{\bm q}_j\}_{j\notin T}$ to ${\mathcal A}$.
						
						\vspace{1mm}
						\item[{\rm (c)}] The adversary ${\mathcal A}$ chooses $t$ random  $\{\check{\bm a}_j\}_{j\in T}$ and send to the challenger.
						
						\vspace{1mm}
						\item[{\rm (d)}] The challenger executes the reconstruction algorithm $m \leftarrow{\sf Reconstruct}(\{\check{\bm a}_j\}_{j\in T},\{{\bm a}_j\}_{j\notin T}.{\sf vk})$.
						
						\vspace{1mm}
						\item[{\rm (e)}] If $m\not\in\{\sum_{i\in[N]}{\bm \omega}_i f(i,{\bm X}_i), \perp\}$, output 1, otherwise 0.
					\end{enumerate}
				\end{boxedminipage}
				\caption{\small  ${\sf EXP}_{\mathcal{A},\Gamma}^{\rm Ver}(T, \lambda)$}
				\label{fig:sec of VPIR}
			\end{center}	
			\vspace{-2em}		
		\end{figure}

		\vspace{-6mm}
		\section{Database Size of Merkle tree-based VPIR}
		\vspace{-1mm}
		\label{fig:Database size of MerkleTree}
		We use the Merkle tree–based VPIR scheme as the baseline for point query. Notably, its database size increases super-linearly with the original dataset as in {Fig.}~\ref{fig:MerkleTree database size}, which imposes significant memory pressure on the server.
		
		\vspace{-2em}
		\begin{figure}[h]
			\centering
			\includegraphics[scale=0.65]{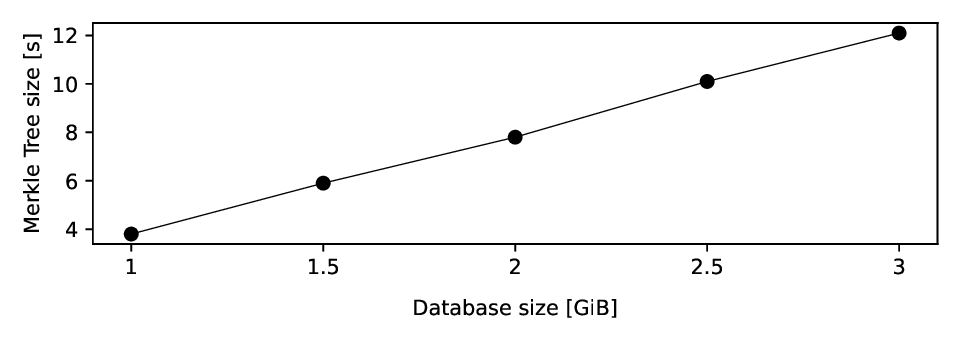}
			\vspace{-1em}
			\caption{\small Merkle tree database size} 
			\label{fig:MerkleTree database size}
		\end{figure}

	\end{subappendices}
	
\end{document}